\newtheorem{theorem}{Theorem}[section]
\newtheorem{lemma}[theorem]{Lemma}
\newtheorem{proposition}[theorem]{Proposition}
\newenvironment{proof}[1][Proof]{\begin{trivlist}
\item[\hskip \labelsep {\bfseries #1}]}{\end{trivlist}}
\newenvironment{definition}[1][Definition]{\begin{trivlist}
\item[\hskip \labelsep {\bfseries #1}]}{\end{trivlist}}
\newcommand{\qed}{\nobreak \ifvmode \relax \else
      \ifdim\lastskip<1.5em \hskip-\lastskip
      \hskip1.5em plus0em minus0.5em \fi \nobreak
      \vrule height0.75em width0.5em depth0.25em\fi}
\title{Ecumenical modal logic}
\author[1]{Sonia Marin}
\author[2]{Luiz Carlos Pereira}
\author[3]{Elaine Pimentel}
\author[4]{Emerson Sales}
\affil[1]{Department of Computer Science, University College London, UK}
\affil[2]{Philosophy Department, PUC-Rio, Brazil}
\affil[3]{Department of Mathematics, UFRN, Brazil}
\affil[4]{Graduate Program of Applied Mathematics and Statistics, UFRN, Brazil}
\begin{document}
\maketitle
  
  \begin{abstract}
The discussion about how to put together Gentzen's systems for classical and intuitionistic logic in a single unified system is back in fashion. Indeed, recently Prawitz and others have been discussing the so called Ecumenical Systems, where connectives from these logics can co-exist in peace. In Prawitz' system, the classical logician and the intuitionistic logician would share the universal quantifier, conjunction, negation, and the constant for the absurd, but they would each have their own existential quantifier, disjunction, and implication, with different meanings. 
Prawitz' main idea is that these different meanings are given by a semantical framework that can be accepted by both parties. In a recent work, Ecumenical sequent calculi and a nested system were presented, and some very interesting proof theoretical properties of the systems were established. 
In this work we extend Prawitz' Ecumenical idea to alethic $\K$-modalities.
  \end{abstract}

\section{Introduction}\label{sec:intro}
In~\cite{DBLP:journals/Prawitz15} Dag Prawitz proposed a natural deduction system for what was later called {\em Ecumenical logic} (EL),  where classical and intuitionistic logic could coexist in peace. 
In this system, the classical logician and the intuitionistic logician would share the universal quantifier, conjunction, negation, and the constant for the absurd ({\em the neutral connectives}), but they would each have their own existential quantifier, disjunction, and implication, with different meanings. 
Prawitz' main idea is that these different meanings are given by a semantical framework that can be accepted by both parties.  While proof-theoretical
aspects were also considered, his work was more focused on investigating the philosophical significance of the fact that classical logic can be translated into intuitionistic logic.

Pursuing the idea of having a better understanding of Ecumenical systems under the proof-theoretical point of view, in~\cite{Syn20} an Ecumenical sequent calculus ($\LEci$) was proposed.
This enabled not only the proof of some important proof theoretical properties (such as cut-elimination and invertibility of rules), but it also provided a better understanding of the Ecumenical nature of consequence: it is intrinsically intuitionistic, being classical only in the presence of classical succedents. 

Also in~\cite{Syn20} a Kripke style semantics for EL was provided, and it was shown how worlds in the semantics could be adequately interpreted as nestings in nested sequents. 
This also enabled the proposal of an Ecumenical multi-succedent sequent system.

In this work, we propose an extension of EL with the alethic modalities of {\em necessity} and {\em possibility}. There are many choices to be made and many relevant questions to be asked, \eg: what is the ecumenical interpretation of Ecumenical modalities? Should we add classical, intuitionistic, or neutral versions for modal connectives? What is really behind the difference between the classical and intuitionistic notions of truth? 

We propose an answer for these questions in the light of Simpson's meta-logical interpretation of modalities~\cite{Sim94} by embedding the expected semantical behavior of the modal operator into the Ecumenical first order logic.  

We start by highlighting the main proof theoretical aspects of $\LEci$ (Section~\ref{sec:LEci}). This is vital for understanding how the embedding mentioned above will mold the behavior of Ecumenical modalities, since modal connectives are interpreted in first order logics using quantifiers. In Sections~\ref{sec:modal} and~\ref{sec:emodal}, we justify our choices by following closely Simpson's script, with the difference that we prove meta-logical soundness and completeness using proof theoretical methods only. We then provide an axiomatic and semantical interpretation of Ecumenical modalities in Sections~\ref{sec:ax} and~\ref{sec:bi}. This makes it possible to extend the discussion, in Section~\ref{sec:extensions}, to relational systems with the usual restrictions on the relation in the Kripke model. That Section also brings two very interesting observations about intuitionistic $\K\T$. We end the paper with a discussion about logical Ecumenism in general.

\section{The system $\LEci$}\label{sec:LEci}

The language $\Lscr$ used for Ecumenical systems is described as follows. We will use a subscript $c$ for the classical meaning and $i$ for the intuitionistic, dropping such subscripts when formulae/connectives can have either meaning. 

Classical and intuitionistic  n-ary predicate symbols ($P_{c}, P_{i},\ldots$) co-exist in $\Lscr$ but have different meanings. 
The neutral logical connectives $\{\bot,\neg,\wedge,\forall\}$ are common for classical and intuitionistic fragments, while $\{\iimp,\ivee,\iexists\}$ and $\{\cimp,\cvee,\cexists\}$ are restricted to intuitionistic and classical interpretations, respectively. 

The sequent system $\LEci$ was presented in~\cite{Syn20} as the sequent counterpart of Prawitz natural deduction system. 
The rules of $\LEci$ are depicted in Fig.~\ref{Fig:LEci}. Observe that the rules $R_c$ and $L_c$ describe the intended meaning of a classical predicate $P_{c}$ from an intuitionistic predicate $P_i$, 

$\LEci$ has very interesting proof theoretical properties, together with a Kripke semantical interpretation, that allowed the proposal of a variety of ecumenical proof systems, such as a multi-conclusion and a nested sequent systems, as well as several  fragments of such systems~\cite{Syn20}.

\begin{figure}[t]
{\sc Initial and structural rules}
\[
\infer[\init]{A,\Gamma \seq A}{} 
\qquad 
\infer[{\W}]{\Gamma \seq A}{\Gamma \seq \bot}
\]
{\sc Propositional rules}
\[
\infer[{\wedge L}]{A \wedge B,\Gamma \seq 
C}{
A, B,\Gamma \seq C} 
\quad
\infer[{\wedge R}]{\Gamma \seq
A \wedge B}{\Gamma
\seq A \quad \Gamma \seq  B} 
\quad
\infer[{\vee_i L}]{A \vee_i B,\Gamma \seq
C}{A,\Gamma
\seq C \quad B,\Gamma \seq C} 
\quad
\infer[{\vee_i R_j}]{\Gamma \seq  A_1\vee_i A_2}{\Gamma
\seq A_j} 
\]
\[
\infer[\vee_cL]{ A \vee_c B,\Gamma \seq\bot }{A,\Gamma \seq \bot & B, \Gamma
\seq \bot} 
\quad
\infer[\vee_cR]{\Gamma
\seq \ A \vee_c B}{\Gamma, \neg A , \neg B \seq \bot}
\quad
\infer[{\iimp L}]{\Gamma, A\iimp B \seq
C}
{A\iimp B,\Gamma \seq A \quad B, \Gamma  \seq 
C} 
\]
\[
\infer[{\iimp R}]{\Gamma \seq 
A\iimp B}{\Gamma, A \seq  B}
\quad
\infer[\cimp L]{ A \rightarrow_c B,\Gamma \seq\bot }{ A \rightarrow_c B,\Gamma \seq A & B,\Gamma \ \seq
\bot}
\quad
\infer[\cimp R]{\Gamma \seq A
\rightarrow_c B}{\Gamma, A , \neg B \seq  \bot} 
\]
\[
\infer[{\neg L}]{\neg A,\Gamma \seq \bot
}{ \neg A,\Gamma \seq  A}
\quad
\infer[{\neg R}]{\Gamma \seq 
\neg A}{\Gamma, A \seq  \bot}
\quad
\infer[{\bot L}]{\bot,\Gamma \seq A}{}
\quad
\infer[L_c]{ P_c,\Gamma\seq\bot}{P_i,\Gamma\seq \bot } 
\quad
\infer[R_c]{\Gamma \seq P_c}{\Gamma,\neg P_i \seq\bot} 
\]
{\sc Quantifiers}
\[
\infer[\forall L]{\forall x.A,\Gamma \seq C }{A[y/x],\forall x.A,\Gamma \seq C}
\quad
\infer[\forall R]{\Gamma \seq \forall x.A }{\Gamma \seq A[y/x]}
\]
\[
\infer[\exists_i L]{ \exists_ix.A,\Gamma \seq C}{A[y/x],\Gamma \seq C}
\quad
\infer[\exists_i R]{\Gamma \seq \exists_ix.A }{\Gamma \seq A[y/x]}
\quad
\infer[\exists_c L]{\exists_cx.A,\Gamma \seq \bot}{A[y/x],\Gamma \seq  \bot}
\quad
\infer[\exists_c R]{\Gamma \seq \exists_cx.A}{\Gamma, \forall x.\neg A \seq \bot}
\]
\caption{Ecumenical sequent system  $\LEci$. In rules 
$\forall R,\iexists L,\cexists L$, the eigenvariable $y$ is fresh.}\label{Fig:LEci}
\end{figure}

Denoting by $\vdash_{\mathsf{S}} A$ the fact that the formula $A$ is a theorem in the proof system $\mathsf{S}$, the following  theorems are easily provable in $\LEci$:
\begin{enumerate}
\item\label{neg} $\vdash_{\LEci} (A \to_c \bot) \leftrightarrow_i (A \to_i \bot) \leftrightarrow_i (\neg A)$;
\item\label{and} $\vdash_{\LEci}  (A \vee_c B) \leftrightarrow_i  \neg (\neg A \wedge \neg B)$;
\item\label{or} $\vdash_{\LEci} (A \to_c B) \leftrightarrow_i \neg (A \wedge \neg B)$; 
\item\label{foe1} $\vdash_{\LEci} (\exists_cx.A) \leftrightarrow_i \neg (\forall x.\neg A)$.
\end{enumerate}
Note  that~(\ref{neg}) means that the Ecumenical system defined in Fig.~\ref{Fig:LEci} does not distinguish between intuitionistic or classical negations, thus they can be called simply $\neg A$. We prefer to keep the negation operator in the language since the calculi presented in this work make  heavy use of it.

Theorems (\ref{and}) to (\ref{foe1}) are of interest since they relate the classical
and the neutral operators: the classical connectives can be defined using negation, conjunction, and the universal quantifier. 

On the other hand,
\begin{enumerate}
\setcounter{enumi}{4}
\item $\vdash_{\LEci} (A \to_i B) \to_i (A \to_c B)$ but $\not\vdash_{\LEci} (A \to_c B) \to_i (A \to_i B)$ in general;
\item $\vdash_{\LEci} A\vee_c\neg A$ but $\not\vdash_{\LEci} A\vee_i\neg A $ in general;
\item $\vdash_{\LEci} (\neg\neg A) \to_c A $ but  $\not\vdash_{\LEci} (\neg\neg A) \to_i A $ in general;
\item\label{mp} $\vdash_{\LEci} (A\wedge(A  \to_i B)) \to_i B$ but $\not\vdash_{\LEci} (A\wedge(A  \to_c B)) \to_i B $ in general;
\item\label{foe2} $\vdash_{\LEci} \forall x.A \iimp  \neg \exists_cx.\neg A $ but $\not\vdash_{\LEci} \neg\exists_cx.\neg A \to_i \forall x.A$ in general.
\end{enumerate}
Observe that~(\ref{foe1}) and~(\ref{foe2}) reveal the asymmetry between definability of quantifiers: while the classical existential can be defined from the universal quantification, the other way around is not true, in general. This is closely related with the fact that, proving $\forall x.A$ from $\neg\exists_cx.\neg A$ depends on $A$ being a classical formula. We will come back to this in Section~\ref{sec:modal}.

On its turn,  the following result states that logical consequence in $\LEci$ is intrinsically intuitionistic. 
\begin{proposition}[\cite{Syn20}]\label{prop:ev}
$\Gamma\vdash B$ is provable in $\LEci$ iff $\vdash_{\LEci}\bigwedge\Gamma\iimp B$.
\end{proposition}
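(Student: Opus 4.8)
The statement is a version of the deduction theorem, and its two directions are essentially dual, each turning on the interplay between the neutral conjunction on the left of a sequent and the intuitionistic implication on the right. The plan is to isolate first a purely structural observation and then read off both implications from it. Writing $\Gamma = A_1,\dots,A_n$, I claim that $\Gamma\seq B$ is derivable in $\LEci$ if and only if $\bigwedge\Gamma\seq B$ is derivable: the left-to-right half is obtained by repeatedly applying the rule $(\wedge L)$, which fuses two antecedent formulas into their conjunction, while the right-to-left half uses the (height-preserving) invertibility of $(\wedge L)$ to split the conjunction apart again. The exact bracketing chosen for $\bigwedge\Gamma$ is immaterial, the base case $n=1$ is trivial, and if $\Gamma$ is empty one simply reads $\bigwedge\Gamma\iimp B$ as $B$.

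Granting this, the direction from $\Gamma\seq B$ to $\vdash_{\LEci}\bigwedge\Gamma\iimp B$ is immediate: from a derivation of $\Gamma\seq B$ I obtain one of $\bigwedge\Gamma\seq B$, and a single application of $(\iimp R)$ produces $\seq\bigwedge\Gamma\iimp B$. For the converse, from $\vdash_{\LEci}\bigwedge\Gamma\iimp B$, i.e.\ a derivation of $\seq\bigwedge\Gamma\iimp B$, I would invoke the invertibility of $(\iimp R)$ to recover $\bigwedge\Gamma\seq B$ and then the structural observation above to reach $\Gamma\seq B$.

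There is a second route for the converse that avoids invertibility and uses cut admissibility instead: since $\bigwedge\Gamma\iimp B,\Gamma\seq\bigwedge\Gamma$ is derivable from $\init$ and $(\wedge R)$, and $B,\Gamma\seq B$ is an instance of $\init$, the rule $(\iimp L)$ yields $\Gamma,\bigwedge\Gamma\iimp B\seq B$; cutting this against the hypothesis $\seq\bigwedge\Gamma\iimp B$ on the formula $\bigwedge\Gamma\iimp B$ removes the implication and leaves $\Gamma\seq B$. Either way, the one ingredient beyond routine rule application is the good proof-theoretic behaviour of $\LEci$ recalled above, namely invertibility of $(\wedge L)$ and $(\iimp R)$, or else cut elimination.

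The step I expect to carry the weight is precisely this appeal to invertibility (or cut), so the main thing to check is that these properties hold for $\LEci$ in exactly the form needed. This should be unproblematic: $\wedge$ is a neutral connective and $\iimp$ is the intuitionistic implication with its standard, manifestly invertible right rule, and throughout the argument no classical connective ($\cvee,\cimp,\cexists$) ever becomes principal, so no delicate interaction with the classical fragment can interfere.
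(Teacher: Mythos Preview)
The paper does not actually prove this proposition: it is imported from \cite{Syn20} and stated without proof. Your argument is the standard deduction-theorem argument for a sequent calculus and is correct; the ingredients you rely on (invertibility of $\wedge L$ and $\iimp R$, and cut-elimination for $\LEci$) are exactly the ``important proof theoretical properties'' the paper says were established in \cite{Syn20}, so your appeal to them is legitimate in this context.
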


To preserve  the ``classical behaviour'', \ie, to satisfy all the principles of classical logic \eg\ {\em modus ponens} and the {\em classical reductio}, it is sufficient that the main operator of the formula
be classical (see~\cite{luiz17}). Thus,  ``hybrid" formulas, \ie, formulas that contain
 classical and intuitionistic operators may have a classical behaviour. Formally,
\begin{definition}
A formula $B$ is called \textit{externally classical} (denoted by $B^{c}$) if and only if $B$ is $\bot$, a classical predicate letter, or its
root operator is classical (that is: $\cimp,\vee_c,\exists_c$). A formula $C$ is {\em classical} if it is built from classical atomic predicates using only the connectives: $\cimp,\vee_c,\exists_c,\neg, \wedge,\forall$, and the unit $\bot$.
\end{definition}
For externally classical formulas we can now prove the following theorems
\begin{enumerate}
\setcounter{enumi}{9}
\item $\vdash_{\LEci} (A \to_c B^{c}) \to_i (A \to_i B^{c})$.
\item $\vdash_{\LEci} (A\wedge(A  \to_c B^{c}))  \iimp B^{c} $.
\item $\vdash_{\LEci} \neg\neg B^c\iimp B^c$.
\item $\vdash_{\LEci} \neg\exists_cx.\neg B^c \to_i \forall x.B^c$.
\end{enumerate}
Moreover, notice that all classical right rules as well as the right rules for the neutral connectives in $\LEci$ are invertible. Since invertible rules can be applied eagerly when proving a sequent, this entails that classical formulas can be eagerly decomposed.
As a consequence, the Ecumenical entailment, when restricted to classical succedents (antecedents having an unrestricted form), is classical.
\begin{theorem}[\cite{Syn20}]\label{thm:classical-right}
Let $C$ be a classical formula and $\Gamma$ be a multiset of Ecumenical formulas. Then
$$\vdash_{\LEci}\bigwedge\Gamma\cimp C \mbox{ iff } \vdash_{\LEci}\bigwedge\Gamma\iimp C.$$ 
\end{theorem}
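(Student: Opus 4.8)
The plan is to pass to sequents and reduce everything to a single double-negation principle. By Proposition~\ref{prop:ev}, $\vdash_{\LEci}\bigwedge\Gamma\iimp C$ holds iff the sequent $\Gamma\seq C$ is derivable. Similarly, $\vdash_{\LEci}\bigwedge\Gamma\cimp C$ holds iff $\Gamma,\neg C\seq\bot$ is derivable: folding $\Gamma$ into $\bigwedge\Gamma$ with $\wedge L$ and applying $\cimp R$ gives one direction, while inverting $\cimp R$ and $\wedge L$ gives the other (this is the first place the stated invertibility of the classical and neutral right rules is used). So it suffices to prove, for classical $C$ and arbitrary $\Gamma$, that $\Gamma\seq C$ is derivable iff $\Gamma,\neg C\seq\bot$ is.

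The direction from $\Gamma\seq C$ to $\Gamma,\neg C\seq\bot$ is immediate and uses nothing about $C$: by admissibility of weakening $\Gamma,\neg C\seq C$ is derivable, and a single application of $\neg L$ yields $\Gamma,\neg C\seq\bot$. For the converse — the classical reductio, where the assumption that $C$ is classical is essential — I first establish the key lemma that $\neg\neg C\seq C$ is derivable for every classical $C$.

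The lemma goes by induction on $C$. If $C$ is $\bot$, a classical atom $P_c$, or has a classical root operator ($\cimp,\cvee,\cexists$), then $C$ is externally classical, so $C=C^{c}$ and $\neg\neg C\seq C$ is exactly the (inverted) content of the already-available theorem $\vdash_{\LEci}\neg\neg B^{c}\iimp B^{c}$. The remaining cases are the neutral roots, handled by the induction hypothesis together with intuitionistic validities that are readily derivable in $\LEci$: if $C=\neg A$ the claim is the triple-negation law $\neg\neg\neg A\iimp\neg A$ (needing no hypothesis); if $C=A\wedge B$ I compose $\neg\neg(A\wedge B)\iimp(\neg\neg A\wedge\neg\neg B)$ with the two hypotheses $\neg\neg A\iimp A$ and $\neg\neg B\iimp B$; and if $C=\forall x.A$ I compose $\neg\neg\forall x.A\iimp\forall x.\neg\neg A$ with the hypothesis $\neg\neg A\iimp A$ pushed under the quantifier. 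Transitivity of $\iimp$ closes each step.

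Granting the lemma, the converse is quick: from a derivation of $\Gamma,\neg C\seq\bot$ the rule $\neg R$ gives $\Gamma\seq\neg\neg C$, and a cut against the derivation of $\neg\neg C\seq C$ supplied by the lemma yields $\Gamma\seq C$ (cut being admissible by cut-elimination for $\LEci$). This gives both directions and hence the theorem. The main obstacle is precisely this induction: the base and classical-root cases are handed to us by the externally-classical double-negation theorem, so the genuine work is verifying that the neutral connectives $\neg,\wedge,\forall$ transmit double-negation elimination, which rests on $\neg\neg$ commuting suitably with each of them intuitionistically. The only other care needed is the routine translation, via Proposition~\ref{prop:ev} and the invertibility of the classical/neutral right rules, between the axiomatic statements $\bigwedge\Gamma\cimp C$, $\bigwedge\Gamma\iimp C$ and the sequents $\Gamma,\neg C\seq\bot$, $\Gamma\seq C$.
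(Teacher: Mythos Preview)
Your proof is correct. The paper does not give its own proof of this theorem---it is imported from \cite{Syn20}---but the paragraph immediately preceding the statement sketches the intended argument: since all classical and neutral right rules of $\LEci$ are invertible, a classical succedent $C$ can be eagerly decomposed until every branch has $\bot$ on the right, at which point $\cimp$ and $\iimp$ coincide (cf.\ item~(\ref{neg})). That is a proof-search argument working directly on derivations.

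Your route is different in style: you isolate the single semantic fact needed, namely that $\neg\neg C\seq C$ holds for every classical $C$, prove it by a clean structural induction (using item~(12) for the externally-classical base/root cases and standard intuitionistic commutation of $\neg\neg$ with $\neg,\wedge,\forall$ for the neutral steps), and then close with one cut. This buys you a self-contained argument that does not rely on a global invertibility analysis of $\LEci$, at the modest cost of invoking cut-admissibility. The paper's invertibility sketch, by contrast, avoids cut but leans on meta-level properties of the calculus established elsewhere. Both arguments are sound; yours is arguably more transparent about exactly where the ``classical'' hypothesis on $C$ is consumed (precisely in the inductive lemma), whereas the paper's sketch leaves that implicit in the observation that only classical and neutral connectives appear in $C$ and all their right rules happen to be invertible.
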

This sums up well, proof theoretically, the {\em ecumenism} of Prawitz' original proposal.

\section{Ecumenical modalities}\label{sec:modal}

In this section we will propose an ecumenical view for alethic modalities. Since there are a number of choices to be made, we will construct our proposal step-by-step.

\subsection{Normal modal logics}\label{sec:nml}
The language of \emph{(propositional, normal) modal formulas} consists of a denumerable set $\Prop$ of propositional symbols and a 
set of propositional connectives enhanced with the unary \emph{modal operators} $\square$ and $\lozenge$ concerning necessity and possibility, respectively~\cite{blackburn_rijke_venema_2001}. 

%
The semantics of modal logics is often determined by means of \emph{Kripke models}. Here, we will follow the approach in~\cite{Sim94}, where a modal logic is characterized by the respective interpretation of the modal model in the meta-theory (called {\em meta-logical characterization}). 

Formally, given a variable $x$, we recall the standard translation $\tradm{\cdot}{x}$ from modal formulas into first-order formulas with at most one free variable, $x$, as follows: if $P$ is atomic, then 
$\tradm{P}{x}=P(x)$; $\tradm{\bot}{x}=\bot$;  for any binary connective $\star$, $\tradm{A\star B}{x}=\tradm{A}{x}\star \tradm{B}{x}$; for the modal connectives
\begin{center}
	\begin{tabular}{lclc@{\quad}lcl}
		$\tradm{\square A}{x}$ & = & $\forall y ( \relfo(x,y) \rightarrow \tradm{A}{y})$ & &
		$\tradm{\lozenge A}{x}$ & = & $\exists y (\relfo(x,y) \wedge \tradm{A}{y})$
		\\
	\end{tabular}
\end{center}
where $\rel(x,y)$ is a binary predicate. 

Opening a parenthesis: such a translation has, as underlying justification, the interpretation of alethic modalities in a Kripke model $\m=(W,R,V)$:
\begin{equation}\label{eq:kripke}
\begin{array}{l@{\qquad}c@{\qquad}l}
\m, w \models \square A & \mbox{ iff } & \text{for all } v \text{ such that  }w \rel v, \m, v \models A.\\
\m, w \models \lozenge A & \mbox{ iff } & \text{there exists } v \text{ such that  } w \rel v	\text{ and } \m, v\models A. 
\end{array}
\end{equation}
$\rel(x,y)$ then represents the \emph{accessibility relation} $R$ in a Kripke frame. This intuition  can be made formal based on the one-to-one correspondence between classical/intuitionistic translations and Kripke modal models~\cite{Sim94}. We close this parenthesis by noting that this justification is only motivational, aiming at introducing modalities. Models will be discussed formally in Section~\ref{sec:bi}.

The object-modal logic OL is then characterized in the first-order meta-logic ML as
$$
\vdash_{OL} A \quad\mbox{ iff } \quad \vdash_{ML} \forall x. \tradm{A}{x}
$$
Hence, if ML is classical logic (CL), the former definition characterizes the classical modal logic $\logick$~\cite{blackburn_rijke_venema_2001}, while if it is intuitionistic logic (IL), then it characterizes the intuitionistic modal logic $\logicik$~\cite{Sim94}. 

In this work, we will adopt EL as the meta-theory (given by the system $\LEci$), hence characterizing what we will defined as the {ecumenical modal logic} $\EK$. 

\subsection{An Ecumenical view of modalities}\label{sec:view}
The language of \emph{Ecumenical modal formulas} consists of a denumerable set $\Prop$ of (Ecumenical) propositional symbols and the set of Ecumenical connectives enhanced with unary \emph{Ecumenical modal operators}. Unlike for the classical case, there is not a canonical definition of constructive or intuitionistic modal logics. Here we 
will mostly follow the approach in~\cite{Sim94} for justifying our choices for the Ecumenical interpretation for {\em possibility} and {\em necessity}.

The ecumenical  translation $\tradme{\cdot}{x}$ from propositional ecumenical formulas into $\LEci$ is defined in the same way as the modal translation $\tradm{\cdot}{x}$ in the last section.
For the case of modal connectives, observe that, due to Proposition~\ref{prop:ev}, the interpretation of ecumenical consequence should be essentially {\em intuitionistic}. With this in mind, the semantical description of $ \square A$ given by~\eqref{eq:kripke} should be understood, for an arbitrary $v$, as: {\em assuming that $wRv$, then $A$ is satisfied in $v$}. Or:
\begin{center}
{\em $A$ is satisfied in $v$ is a consequence of the fact that $wRv$.}
\end{center}
This implies that the box modality is a  {\em neutral connective}. The diamond, on the other hand, has two possible interpretations: classical and intuitionistic, since its leading connective is an existential quantifier. Hence we should have the ecumenical modalities: $\square, \ilozenge, \clozenge$, determined by the translations
\begin{center}
	\begin{tabular}{lclc@{\qquad}lcl}
	$\tradme{\square A}{x}$ & = & $\forall y ( \relfo(x,y) \iimp \tradme{A}{y})$\\
		$\tradme{\ilozenge A}{x}$ & = & $\iexists y (\relfo(x,y) \wedge \tradme{A}{y})$ & &
		$\tradme{\clozenge A}{x}$ & = & $\cexists y (\relfo(x,y) \wedge \tradme{A}{y})$
		\\
	\end{tabular}
\end{center}
Observe that, due to the equivalence~(\ref{foe1}), we have
\begin{enumerate}
\setcounter{enumi}{13}
\item $\clozenge A\iequiv \neg\square\neg A$
\end{enumerate}
On the other hand, $\square$ and $\ilozenge$ are not inter-definable due to (\ref{foe2}).
Finally, if $A^c$ is externally classical, then
\begin{enumerate}
\setcounter{enumi}{14}
\item $\square A^c\iequiv \neg\clozenge\neg A^c$
\end{enumerate}
This means that, when restricted to the classical fragment, $\square$ and $\clozenge$ are duals. 
%
This reflects well the ecumenical nature of the defined modalities. We will denote by $\EK$ the Ecumenical modal logic meta-logically characterized by  $\LEci$ via $\tradme{\cdot}{x}$.

\section{A labeled system for $\EK$}\label{sec:emodal}

\begin{figure}[t]
{\sc Initial and structural rules}
	\[
	\infer[\init]{ x:A,\Gamma \vdash x:A}{}
	\quad
\infer[{\W}]{\Gamma \vdash  x:A}{\Gamma \vdash  y:\bot}
	\]
{\sc Propositional rules}
	\[
	\infer[\wedge L]{ x:A\wedge B,\Gamma \vdash z:C}{ x:A, x:B,\Gamma \vdash z:C}
	\quad
	\infer[\wedge R]{\Gamma \vdash x:A \wedge B}
	{\Gamma \vdash x:A
		&
		\Gamma \vdash x:B}
	\]
	\[
	\infer[\ivee L]{x: A \ivee B,\Gamma \vdash z:C}
	{x:A,\Gamma \vdash z:C
		&
		x:B,\Gamma \vdash z:C}
	\quad
	\infer[\ivee R_j]{\Gamma \vdash x:A_1 \ivee A_2}{\Gamma \vdash x:A_j}
	\]
		\[
	\infer[\cvee L]{x: A \cvee B,\Gamma \vdash x:\bot}
	{x:A,\Gamma \vdash x:\bot
		&
		x:B,\Gamma \vdash x:\bot}
	\quad
	\infer[\cvee R]{\Gamma \vdash x:A \cvee B}{\Gamma,x:\neg A,x:\neg B \vdash x:\bot}
	\]
	\[
	\infer[\iimp L]{x:A \iimp B,\Gamma \vdash z:C}
	{x:A \iimp B,\Gamma \vdash x:A
		&
		 x:B,\Gamma \vdash z:C}
	\quad
	\infer[\iimp R]{\Gamma \vdash x:A \iimp B}{x:A,\Gamma \vdash x:B}
	\]
	\[
	\infer[\cimp L]{x:A \cimp B,\Gamma \vdash x:\bot}
	{x:A \cimp B,\Gamma \vdash x:A
		&
		 x:B,\Gamma \vdash y: \bot}
	\quad
	\infer[\cimp R]{\Gamma \vdash x:A \cimp B}{x:A,x:\neg B,\Gamma \vdash x: \bot}
	\]
	\[
	\infer[{\neg L}]{x:\neg A,\Gamma \vdash  x:\bot
}{x:\neg A,\Gamma \vdash   x:A}
\quad
\infer[{\neg R}]{\Gamma \vdash  
x:\neg A}{x:A,\Gamma \vdash   x:\bot}
\quad
	\infer[\bot]{x:\bot, \Gamma \vdash z:C}{}
\]
	\[
\infer[L_c]{\Gamma, x:P_c\vdash  x:\bot}{\Gamma, x:P_i\vdash  x:\bot } 
\quad
\infer[R_c]{\Gamma \vdash  x:P_c}{\Gamma,x:\neg P_i \vdash  x:\bot} 
\]
{\sc Modal rules}
	\[
	\infer[\square L]{xRy, x:\Box A,\Gamma \vdash z:C}{ xRy,y:A,x:\Box A,\Gamma \vdash z:C}
	\quad
	\infer[\square R]{\Gamma \vdash x:\Box A}{xRy, \Gamma \vdash y:A}
	\quad
	\infer[\ilozenge L]{x:\ilozenge A,\Gamma \vdash z:C}{xRy,  y: A,\Gamma \vdash z:C}
	\]
	\[
	\infer[\ilozenge R]{xRy, \Gamma \vdash x:\ilozenge A}{xRy,\Gamma \vdash y:A}
	\quad
	\infer[\clozenge L]{x:\clozenge A,\Gamma \vdash x:\bot}{xRy,  y: A,\Gamma \vdash x:\bot}
	\quad
	\infer[\clozenge R]{\Gamma \vdash x:\clozenge A}{x:\Box\neg A,\Gamma \vdash x:\bot}
	\]
	\caption{Ecumenical modal system $\labEK$. In rules $\square R,\ilozenge L, \clozenge L$,  the eigenvariable $y$ does not occur free in any formula of the conclusion.}
	\label{fig:EK}
\end{figure}

One of the advantages of having an Ecumenical framework is that some well known classical/intuitionistic systems arise as fragments~\cite{Syn20}. In the following, we will seek for such systems by proposing a labeled sequent system for ecumenical modalities. 

The basic idea behind labeled proof systems for modal logic is to internalize elements of the associated Kripke  semantics (namely, the worlds of a Kripke structure and the accessibility relation between them) into the syntax.
\emph{Labeled modal formulas} 
are either \emph{labeled formulas} of the form $x:A$ or \emph{relational atoms} of the form $x \rel y$, where $x, y$ range over a set of variables and $A$ is a modal formula. 
\emph{Labeled sequents} 
have the form $\Gamma \vdash x:A$, where $\Gamma$ is a multiset containing labeled modal formulas.

Following~\cite{Sim94}, we will prove the following meta-logical soundness and completeness theorem.
\begin{theorem}\label{thm:meta}
Let $\Gamma$ be a multiset of labeled modal formulas and denote $[\Gamma]=\{\rel(x,y)\mid xRy\in\Gamma\}\cup \{\tradme{B}{x}\mid x:B\in\Gamma\}$.
The following are equivalent:
\begin{enumerate}
\item[1.] $\Gamma\vdash x:A$ is provable in $\labEK$. 
\item[2.] $[\Gamma]\seq \tradme{A}{x}$ is provable in $\LEci$.
\end{enumerate}
\end{theorem}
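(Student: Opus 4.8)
The statement is an equivalence between provability in the labeled modal system $\labEK$ and provability of a translated sequent in $\LEci$. Since both systems are sequent calculi, the natural strategy is a \emph{syntactic back-and-forth simulation}: each rule of $\labEK$ should correspond, under the translation $[\cdot]$ and $\tradme{\cdot}{x}$, to a derivation in $\LEci$, and conversely. I would prove the two implications $1 \Rightarrow 2$ and $2 \Rightarrow 1$ separately, each by induction on the height of the given derivation.

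For the direction $1 \Rightarrow 2$, I would proceed by induction on a $\labEK$-proof of $\Gamma \vdash x:A$. The base cases $\init$ and $\bot$ translate directly to the corresponding axioms of $\LEci$. For each inference rule I must check that applying $[\cdot]$ to premises and conclusion yields a valid $\LEci$-inference (possibly a short derivation rather than a single rule). The purely propositional and structural rules are routine, since the translation commutes with the binary connectives by definition and leaves $\neg,\wedge,\bot$ fixed. The interesting cases are the modal rules. Here I would unfold the definitions: $\tradme{\square A}{x} = \forall y(\relfo(x,y)\iimp\tradme{A}{y})$ and $\tradme{\ilozenge A}{x}=\iexists y(\relfo(x,y)\wedge\tradme{A}{y})$, and similarly for $\clozenge$. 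For instance, $\square R$ (with premise $xRy,\Gamma\vdash y:A$ and fresh $y$) should simulate via $\forall R$ followed by $\iimp R$, the eigenvariable condition on $y$ transferring exactly to the freshness side condition of $\forall R$; $\square L$ should simulate via $\forall L$ (instantiating with $y$) followed by $\iimp L$, using the relational atom $xRy \in \Gamma$, i.e.\ $\relfo(x,y)\in[\Gamma]$, to discharge the left premise of $\iimp L$. The diamond rules are handled analogously using $\iexists L/\iexists R$ and $\cexists L/\cexists R$; the only point requiring care is that $\clozenge R$ is stated with premise $x:\square\neg A,\Gamma\vdash x:\bot$, which under the translation must be reconciled with the $\cexists R$ rule of $\LEci$ whose premise involves $\forall x.\neg A$ — here I would invoke precisely equivalence~(\ref{foe1}), i.e.\ $\cexists x.A \iequiv \neg\forall x.\neg A$, to bridge the two formulations.

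For the converse $2 \Rightarrow 1$, I would again induct on the $\LEci$-proof, but the obstacle is that not every $\LEci$-derivation of $[\Gamma]\seq\tradme{A}{x}$ decomposes the formulas in a way that respects the modal structure: the translation embeds modalities as quantifier–connective combinations, and a generic $\LEci$-proof might, for example, apply $\iimp L$ or $\forall L$ to the \emph{components} of a translated box without the grouping that $\square L$ imposes. The cleanest way around this is to restrict attention to \emph{translated sequents} (every formula in the range of $\tradme{\cdot}{x}$ and every relational formula of the form $\relfo(x,y)$) and to argue that on such sequents the invertibility of the relevant right rules and the admissibility of a suitable permutation/focusing discipline let us assume, without loss of generality, that quantifier and implication/conjunction rules introduced by a translated modality are applied \emph{together} in the pattern matched by a single $\labEK$ modal rule. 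I would phrase this as: any $\LEci$-proof can be rearranged so that its bottom-up decomposition of each image formula follows its modal skeleton, after which the back-translation of rules is the mirror image of the forward direction.

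\textbf{Main obstacle.} The forward direction is essentially bookkeeping, so the crux is the converse, specifically controlling the shape of arbitrary $\LEci$-proofs so that they can be read back as $\labEK$-proofs. I expect to lean on the structural metatheory of $\LEci$ quoted earlier — invertibility of the right rules for neutral and classical connectives, together with cut-elimination — to normalize proofs into a modality-respecting form; the freshness conditions on eigenvariables (for $\forall R$, $\iexists L$, $\cexists L$) will need to be tracked carefully to match the eigenvariable side conditions of $\square R$, $\ilozenge L$, and $\clozenge L$, and the asymmetry flagged by~(\ref{foe2}) (that $\square$ and $\ilozenge$ are not interdefinable, and that recovering $\forall$ from $\neg\cexists\neg$ needs a classical body) is exactly what forces the box to be treated as neutral rather than being absorbed into a diamond, so I would make sure the argument never silently assumes such an interdefinability.
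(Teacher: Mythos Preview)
Your proposal is correct and matches the paper's proof essentially step for step: the forward direction is the same rule-by-rule simulation of each $\labEK$ modal rule by a short $\LEci$ derivation, and for the converse you identify precisely the obstacle the paper addresses (an arbitrary $\LEci$-proof need not group the quantifier/connective pair of a translated modality) and the same remedy (invertibility of the relevant rules plus a focusing-style normalization so that those pairs may be assumed applied together). The only cosmetic difference is that for the $\clozenge R$ bridge the paper cites the specific equivalence $\forall y.\neg(\rel(x,y)\wedge\tradme{A}{y})\iequiv\forall y.(\rel(x,y)\iimp\tradme{\neg A}{y})$ rather than~(\ref{foe1}), but this is the same observation.
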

\begin{proof}
We will consider the following translation between $\labEK$ rule applications and $\LEci$ derivations, where the translation for the propositional rules is the trivial one:
\begin{align*}
\vcenter{
\infer[\square L]{xRy,x:\square A,\Gamma\vdash z:C}{xRy,y: A,x:\square A,\Gamma\vdash z:C}
}
&\quad\leadsto\quad
\\
\multispan2{$\hspace*{-2cm}
\vcenter{
\infer=[(\forall L, \iimp L)]{
	\rel(x,y),\forall y.(\rel(x,y)\iimp\tradme{A}{y}),[\Gamma]\seq\tradme{C}{z}
	}{
	\infer{\rel(x,y),\forall y.(\rel(x,y)\iimp\tradme{A}{y}), \rel(x,y)\iimp\tradme{A}{y}, [\Gamma]\seq \rel(x,y)}{}
	&
	\deduce{\rel(x,y),\tradme{A}{y},\forall y.(\rel(x,y)\iimp\tradme{A}{y}),[\Gamma]\seq\tradme{C}{z}}{}&
	}
}
$}
\end{align*}
\begin{align*}
&
\vcenter{
\infer[\square R]{\Gamma\vdash x:\square A}{xRy,\Gamma\vdash y: A}
}
&
\quad\leadsto\quad
&
\vcenter{
\infer=[(\forall R, \iimp R)]{[\Gamma]\seq \forall y.(\rel(x,y)\iimp\tradme{A}{y})}
{\rel(x,y),[\Gamma]\seq \tradme{A}{y}}
}
\\&
\vcenter{
\infer[\lozenge_i L]{x:\ilozenge A,\Gamma\vdash z:C}{xRy,y:A,\Gamma\vdash z:C}
}
&
\quad\leadsto\quad
&
\vcenter{
\infer=[(\iexists L, \wedge L)]{\iexists y.(\rel(x,y)\wedge\tradme{A}{y}),[\Gamma]\seq \tradme{C}{z}}{\rel(x,y),\tradme{A}{y},[\Gamma]\seq \tradme{C}{z}}
}
\\&
\vcenter{
\infer[\lozenge_c L]{x:\clozenge A,\Gamma\vdash x:\bot}{xRy,y:A,\Gamma\vdash x:\bot}
}
&
\quad\leadsto\quad
&
\vcenter{
\infer=[(\cexists L, \wedge L)]{\cexists y.(\rel(x,y)\wedge\tradme{A}{y}),[\Gamma]\seq \bot}
{\rel(x,y),\tradme{A}{y},[\Gamma]\seq \bot}
}
\\&
\vcenter{
\infer[\lozenge_i R]{
	xRy,\Gamma\vdash x:\ilozenge A
	}{
	xRy,\Gamma\vdash y: A}
}
&
\quad\leadsto\quad
&
\vcenter{
\infer=[(\iexists R, \wedge R)]{
		\rel(x,y),[\Gamma]\seq \iexists y.(\rel(x,y)\wedge\tradme{A}{y})
	}{
	\infer{\rel(x,y),[\Gamma]\seq \rel(x,y)}{}
	&
	\deduce{\rel(x,y),[\Gamma]\seq \tradme{A}{y}}{}
	}
}
\\&
\vcenter{
\infer[\lozenge_c R]{\Gamma\vdash x:\clozenge A}{x:\Box\neg A,\Gamma\vdash x:\bot}
}
&
\quad\leadsto\quad
&
\vcenter{
\infer[(\cexists R)]{[\Gamma]\seq \cexists y.(\rel(x,y)\wedge\tradme{A}{y})}
{\forall y.\neg(\rel(x,y)\wedge\tradme{A}{y}),[\Gamma]\seq\bot}
}
\end{align*}
$(1)\Rightarrow(2)$ is then easily proved by induction on a proof of $\Gamma\vdash x:A$ in $\labEK$. 

For proving $(2)\Rightarrow(1)$  observe that
\begin{itemize}
\item the rules $\iimp R,\wedge L, \wedge R$ are invertible in $\LEci$ and $\iimp L$ is semi-invertible on the right (\ie~if its conclusion is valid, so is its right premise);
\item $\vdash_{\LEci}\forall y.\neg (\rel(x,y)\wedge \tradme{A}{y})\leftrightarrow_i
\forall y.\rel(x,y)\iimp\tradme{\neg A}{y}$.
\end{itemize}
Hence, in the translated derivations in $\LEci$ 
{\em provability} is maintained from the end-sequent to the open leaves. 
This means that choosing a formula $\tradme{B}{x}$ to work on is equivalent to performing
all the steps of the translation given above.
Therefore,
any derivation of $[\Gamma]\vdash\tradme{A}{x}$ in $\LEci$ can be transformed into a derivation of the same sequent where all the steps of the translation are actually performed. 
This is, in fact, one of the pillars of the {\em focusing} method ~\cite{andreoli01apal,DBLP:journals/apal/LiangM11}. 
In order to illustrate this, consider
the derivation
\begin{center}
$
\infer[(\forall L)]{\rel(x,y),\tradme{\Box A}{x},[\Gamma]\seq\tradme{C}{z}}
{\deduce{\rel(x,y),\rel(x,y)\iimp\tradme{A}{y},\tradme{\Box A}{x},[\Gamma]\seq\tradme{C}{z}}{\pi}}
$
\end{center}
where one decides to work on the formula $\forall y.(\rel(x,y)\iimp\tradme{A}{y})=\tradme{\Box A}{x}$ obtaining a premise containing the formula $\rel(x,y)\iimp\tradme{A}{y}$, with proof $\pi$. Since $\iimp L$ is semi-invertible on the right and the left premise is straightforwardly provable, 
then $\pi$ can be substituted by the proof:
\begin{center}
$
\infer[\iimp L]{
	\rel(x,y),\rel(x,y)\iimp\tradme{A}{y},\tradme{\Box A}{x},[\Gamma]\seq\tradme{C}{z}
}{
	 \infer{\rel(x,y),\rel(x,y)\iimp\tradme{A}{y},\tradme{\Box A}{x},[\Gamma]\seq\rel(x,y)}{}
	 &
	 \deduce{\rel(x,y),\tradme{A}{y},\tradme{\Box A}{x},[\Gamma]\seq\tradme{C}{z}}{\pi'}
}
$
\end{center}
Thus, by inductive hypothesis, $[xRy,y:A,x:\Box A,\Gamma\vdash z:C]$ is provable in $\labEK$.
\end{proof}
Finally, observe that, when restricted to the intuitionistic and neutral operators, $\labEK$
matches {\em exactly} Simpson's sequent system $\mathcal L_{\Box\Diamond}$~\cite{Sim94}. The analyticity of $\labEK$ is presented in Appendix~\ref{sec:cut}.

\section{Axiomatization}\label{sec:ax}

So far, we have motivated our discussion on Ecumenical modalities based on Simpson's 
approach of meta-logical characterization. But what about the axiomatic characterization of $\EK$? 

Classical modal logic $\K$ is characterized as propositional classical logic, extended with the \emph{necessitation rule} (presented in Hilbert style)
$
A/\square A
$
and the \emph{distributivity axiom}
$
\ka:\;\square(A\ra B)\ra (\square A\ra\square B) 
$. 
Intuitionistic modal logic should then consist of propositional intuitionistic logic plus necessitation and distributivity.
The problem is that there are many variants of axiom $\ka$ that induces classically, but not intuitionistically, equivalent systems.  In fact, the following 
axioms classically follow from $\ka$ and the De Morgan laws, but not in an intuitionistic setting
\begin{center}
$
\begin{array}{lc@{\qquad}l}
\ka_1:\;\square(A\ra B)\ra (\lozenge A\ra\lozenge B)  & & \ka_2:\;\lozenge(A\vee B)\ra (\lozenge A\vee\lozenge B)\\
\ka_3:\;(\lozenge A\ra \square B)\ra \square(A\ra B) & &  \ka_4:\;\lozenge \bot\ra\bot
\end{array}
$
\end{center}
The combination of axiom $\ka$ with axioms $\ka_1$ to $\ka_4$ then exactly characterizes intuitionistic modal logic $\IK$~\cite{plotkin:stirling:86,Sim94}. 

In the ecumenical setting, there are many more variants from $\ka$, depending on the classical or intuitionistic interpretation of the implication and diamond. It is easy to see that 
the intuitionistic versions of the above axioms are provable in $\EK$ (see Appendix~\ref{app}). Hence, by combining this result with cut-elimination (Appendix~\ref{sec:cut}), $\EK$ is {\em complete} w.r.t.~this set of axioms. 
Observe that, since the intuitionistic operators imply the classical ones, if we substitute $i$ by $c$, the resulting clause is either not provable in $\EK$ or it is a consequence of the intuitionistic versions.


Next we will show that $\EK$ is also sound w.r.t.~this set of axioms. For that, we propose an Ecumenical birrelational semantics for $\EK$. The proof passes through a translation from $\labEK$ to $\mathcal L_{\Box\Diamond}$, so we remember that this last labeled system is sound and complete w.r.t. the birelational semantics of $\IK$~\cite{Sim94}.

\section{Ecumenical birelational models}\label{sec:bi}

In~\cite{DBLP:journals/jlp/Avigad01}, the negative translation was used to relate cut-elimination theorems for classical and  intuitionistic logics. Since part of the argumentation was given semantically, a notion of Kripke semantics for classical logic was stated, via the respective semantics for intuitionistic logic and the double negation interpretation (see also~\cite{DBLP:journals/apal/IlikLH10}). In~\cite{luiz17} a similar definition was given, but under the Ecumenical approach, and it was extended to the first-order case in~\cite{Syn20}. We will propose a birelational Kripke semantics for Ecumenical modal logic, which is an extension of the proposal in~\cite{luiz17} to modalities.

\begin{definition}\label{def:ekripke} A {\em birelational Kripke model} is a quadruple  
$\m=(W,\leq,R,V)$ where $(W,R,V)$ is a Kripke model
such that $\wld$ is partially ordered with order $\leq$, the 
satisfaction function $V:\langle W,\leq\rangle\rightarrow  \langle2^\Prop,\subseteq\rangle$ is monotone and:

\noindent
F1. For all worlds $w,v,v'$, if $wRv$ and $v \leq v'$, there is a $w'$ such that $w \leq w'$ and $w'Rv'$;

\noindent
F2. For all worlds $w', w, v$, if $w \leq w'$ and $wRv$, there is a $v'$ such that $w'Rv'$ and $v \leq v'$.

An {\em Ecumenical modal Kripke model} is a birelational Kripke model such that truth of an ecumenical formula at a point $w$ is the
smallest relation $\modelse$ satisfying 

\noindent
$
\begin{array}{l@{\qquad}c@{\qquad}l}
\m,w\modelse P_{i} & \mbox{ iff } & P_{i}\in \val(w);\\
\m,w\modelse A\wedge B & \mbox{ iff } & \m,w\modelse A \mbox{ and } \m,w\modelse B;\\
\m,w\modelse A\vee_i B & \mbox{ iff } & \m,w\modelse A \mbox{ or } \m,w\modelse B;\\
\m,w\modelse A\iimp B & \mbox{ iff } & \text{for all } v \text{ such that  }w \leq v,   
\m,v\modelse A
\mbox{ implies } 
\m,v\modelse B;\\
\m,w\modelse \neg A & \mbox{ iff } & \text{for all } v \text{ such that  }w \leq v,   
\m,v\not\modelse A;\\
\m,w \modelse \bot & & \mbox{never holds};\\
\m, w \modelse \square A & \mbox{ iff } & \text{for all }  v,w' \text{ such that  }w\leq w'\text{ and }  w'\rel v, \m, v \modelse A.\\
\m, w \modelse \ilozenge A & \mbox{ iff } & \text{there exists } v \text{ such that  } w \rel v	\text{ and } \m, v\modelse A. \\
\m,w\modelse P_c  & \mbox{ iff } & \m,w\modelse \neg(\neg P_i);\\
\m,w\modelse A\vee_c B & \mbox{ iff } & \m,w\modelse \neg(\neg A\wedge \neg B);\\
\m,w\modelse A\cimp B & \mbox{ iff } & \m,w\modelse  \neg(A\wedge \neg B).\\
\m, w \modelse \clozenge A & \mbox{ iff } & \m,w\modelse  \neg\square\neg A.
\end{array}
$
\end{definition}
Since, restricted to intuitionistic and neutral connectives, $\modelse$ is the usual birelational interpretation $\models$ for $\IK$ (and, consequently, $\mathcal L_{\Box\Diamond}$ \cite{Sim94}), and since the classical connectives are interpreted 
via the neutral ones using the double-negation translation, an Ecumenical modal Kripke model is nothing else than the standard birelational Kripke model for intuitionistic modal logic $\IK$.
Hence, it is not hard to prove soundness and completeness of the semantical interpretation above w.r.t. the sequent system $\labEK$.
We start by defining a translation from $\labEK$ to $\mathcal L_{\Box\Diamond}$. We will abuse the notation and represent the connectives of $\IK/\mathcal L_{\Box\Diamond}$ using the neutral/intuitionistic correspondents in $\EK/\labEK$.
\begin{definition}\label{def:transS}
Let $\tradmk{\cdot}$ be the translation between formulas in $\EK$ and $\IK$ recursively defined as

\noindent
$
\begin{array}{lclc@{\qquad}lcl}
\tradmk{P_i} & = & P_i & & \tradmk{P_c} & = & \neg(\neg(P_i))\\
\tradmk{\bot} & = & \bot & &
\tradmk{\neg A} & = & \neg \tradmk{A}\\
\tradmk{A\wedge B} & = & \tradmk{A}\wedge\tradmk{B} & &
\tradmk{A\ivee B} & = & \tradmk{A}\ivee\tradmk{B}\\
\tradmk{A\iimp B} & = & \tradmk{A}\iimp\tradmk{B} & &
\tradmk{A\cvee B} & = & \neg(\neg\tradmk{A}\wedge\neg\tradmk{B})\\
\tradmk{A\cimp B} & = & \neg(\tradmk{A}\wedge\neg\tradmk{B})& &
\tradmk{\Box A} & = & \Box \tradmk{A}\\
\tradmk{\ilozenge A} & = & \ilozenge \tradmk{A} & &
\tradmk{\clozenge A} & = & \neg\Box\neg \tradmk{A}\\
\end{array}
$\\

\noindent
The translation $\tradms{\cdot}:\labEK\to \mathcal L_{\Box\Diamond}$ is defined as
$\tradms{x:A}=x:\tradmk{A}$ and assumed identical on relational atoms.
\end{definition}
Since the translations above preserve the double-negation interpretation of classical connectives into intuitionistic (modal) logic, it is possible to prove:
\begin{lemma}\label{lem:trad1}
$\vdash_{\labEK} \Gamma\vdash x:A$
iff $\vdash_{\labEK}\tradms{\Gamma\vdash x:A}$
iff $\vdash_{\mathcal L_{\Box\Diamond}}\tradms{\Gamma\vdash x:A}$.
\end{lemma}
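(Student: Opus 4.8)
The statement packages two separate equivalences, and the plan is to treat them independently. The first equivalence, $\vdash_{\labEK}(\Gamma\vdash x:A)$ iff $\vdash_{\labEK}\tradms{\Gamma\vdash x:A}$, is an internal fact about $\labEK$: applying the translation $\tradmk{\cdot}$ changes every classical operator into its double-negation neutral paraphrase, and by the provable equivalences~(\ref{neg})--(\ref{foe1}) together with clauses~(14) and the definition of $\clozenge$, each formula $A$ is provably $\iequiv$-equivalent in $\LEci$ (hence, via Theorem~\ref{thm:meta}, derivably equivalent in $\labEK$) to its translate $\tradmk{A}$. The second equivalence, $\vdash_{\labEK}\tradms{\Gamma\vdash x:A}$ iff $\vdash_{\mathcal L_{\Box\Diamond}}\tradms{\Gamma\vdash x:A}$, is the crux: it asserts that on the image of the translation—where only neutral and intuitionistic connectives and the $\Box,\ilozenge$ modalities appear—the ecumenical system $\labEK$ and Simpson's system $\mathcal L_{\Box\Diamond}$ prove exactly the same sequents.

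For the second equivalence I would exploit the observation already recorded after Theorem~\ref{thm:meta}: when restricted to the intuitionistic and neutral operators, $\labEK$ matches $\mathcal L_{\Box\Diamond}$ \emph{exactly}. Since $\tradmk{\cdot}$ produces only formulas built from $P_i,\bot,\neg,\wedge,\ivee,\iimp,\Box,\ilozenge$, the image of $\tradms{\cdot}$ lies entirely within this shared fragment. The plan is therefore to argue that every rule application in a $\labEK$-derivation of a translated sequent uses only rules from this common fragment, so the derivation \emph{is already} an $\mathcal L_{\Box\Diamond}$-derivation, and conversely. The one subtlety requiring care is cut: a $\labEK$-proof of a translated end-sequent might pass through cut-formulas (or, in a cut-free setting, through subformulas introduced by left rules) that contain classical operators. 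To neutralise this I would invoke cut-elimination for $\labEK$ (Appendix~\ref{sec:cut}) and then appeal to the subformula property, which guarantees that a cut-free $\labEK$-proof of a sequent whose formulas lie in the neutral/intuitionistic fragment contains only formulas from that fragment—hence only rules shared with $\mathcal L_{\Box\Diamond}$.

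The main obstacle is precisely this last point: establishing that a cut-free $\labEK$-derivation of a translated sequent cannot secretly introduce a classical connective. The worry is that a left rule read bottom-up could in principle conjure a classical subformula, but the subformula property of the analytic (cut-free) calculus rules this out, since every formula occurring in the proof is a subformula of a formula in the end-sequent, and the end-sequent is in the restricted fragment. Chaining the two equivalences then yields the three-way biconditional: the middle term $\vdash_{\labEK}\tradms{\Gamma\vdash x:A}$ serves as the bridge, being equivalent to the left term by the provable-equivalence argument and to the right term by the fragment-identity argument. I would present the first equivalence as a routine induction citing the $\LEci$-theorems, and devote the detailed argument to the cut-elimination-plus-subformula-property step for the second.
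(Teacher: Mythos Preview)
Your proposal is correct and aligns with the paper's own treatment, which in fact gives no detailed proof at all: the paper simply remarks that ``the translations above preserve the double-negation interpretation of classical connectives into intuitionistic (modal) logic'' and states the lemma as an immediate consequence. Your plan unpacks exactly this---the provable $\iequiv$-equivalence of $A$ and $\tradmk{A}$ for the first biconditional, and the literal coincidence of $\labEK$ with $\mathcal L_{\Box\Diamond}$ on the neutral/intuitionistic fragment (secured via cut-elimination and the subformula property) for the second---so you are supplying the argument the paper leaves implicit rather than diverging from it.
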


\begin{lemma}\label{lem:trad3} Let $A$ be a formula in $\EK$. Then 
$\modelse A$ iff $\models \tradmk{A}$.
\end{lemma}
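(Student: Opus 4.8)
The plan is to prove the stronger, pointwise statement: for every Ecumenical modal Kripke model $\m=(W,\leq,R,V)$ and every world $w\in W$,
\[ \m,w\modelse A \quad\text{iff}\quad \m,w\models\tradmk{A}, \]
by induction on the structure of $A$. The validity statement $\modelse A$ iff $\models\tradmk{A}$ then follows at once: by the remark preceding Definition~\ref{def:transS}, an Ecumenical modal Kripke model and a standard birelational $\IK$-model are literally the same quadruple $(W,\leq,R,V)$, so quantifying over all such models (and all worlds) on both sides is legitimate, and the pointwise equivalence transfers validity from one notion to the other.

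The crucial structural observation, already recorded just before the statement, is that the clauses defining $\modelse$ on the neutral and intuitionistic connectives ($P_i$, $\bot$, $\wedge$, $\ivee$, $\iimp$, $\neg$, $\Box$, $\ilozenge$) are verbatim the clauses of the $\IK$ interpretation $\models$, while $\tradmk{\cdot}$ is homomorphic on exactly these connectives ($\tradmk{A\star B}=\tradmk{A}\star\tradmk{B}$, $\tradmk{\neg A}=\neg\tradmk{A}$, $\tradmk{\Box A}=\Box\tradmk{A}$, $\tradmk{\ilozenge A}=\ilozenge\tradmk{A}$). Hence each such inductive step is immediate: for instance $\m,w\modelse\Box A$ holds iff $\m,v\modelse A$ for all $v,w'$ with $w\leq w'$ and $w'Rv$, which by the induction hypothesis applied at each $v$ is equivalent to $\m,v\models\tradmk{A}$ for all such $v,w'$, i.e.\ $\m,w\models\Box\tradmk{A}$, which is $\m,w\models\tradmk{\Box A}$. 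The cases of $\wedge$, $\ivee$, $\iimp$, $\neg$, $\ilozenge$, together with the atomic and absurd base cases, are handled identically by unfolding the matching clause and invoking the hypothesis at the relevant worlds.

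The only cases requiring care are the classical connectives $P_c$, $\cvee$, $\cimp$, $\clozenge$, and this is where I expect the main (if mild) obstacle to lie: here the defining clause of $\modelse$ is not stated on syntactic subformulas but refers to $\modelse$ on the neutral double-negation unfolding of the whole formula (e.g.\ $\m,w\modelse A\cvee B$ iff $\m,w\modelse\neg(\neg A\wedge\neg B)$, and $\m,w\modelse\clozenge A$ iff $\m,w\modelse\neg\Box\neg A$), which is not a proper subformula. The point to verify is that this unfolding is matched exactly by the corresponding translation clause (e.g.\ $\tradmk{A\cvee B}=\neg(\neg\tradmk{A}\wedge\neg\tradmk{B})$, $\tradmk{\clozenge A}=\neg\Box\neg\tradmk{A}$). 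Concretely, I unfold $\m,w\modelse\neg(\neg A\wedge\neg B)$ using the already-established neutral clauses for $\neg$ and $\wedge$ until it is expressed purely in terms of $\m,u\modelse A$ and $\m,u'\modelse B$ at various $\leq$-successors $u,u'$; since $A$ and $B$ are the immediate subformulas, the induction hypothesis yields $\m,u\modelse A$ iff $\m,u\models\tradmk{A}$ and likewise for $B$; and on the $\IK$ side $\m,w\models\neg(\neg\tradmk{A}\wedge\neg\tradmk{B})$ unfolds via the identical neutral clauses to the very same pattern over $\tradmk{A},\tradmk{B}$. Hence the two sides coincide. The cases $P_c$ (unfolding $\neg\neg P_i$) and $\cimp$ (unfolding $\neg(A\wedge\neg B)$) are the same in spirit, and $\clozenge$ additionally appeals to the shared $\Box$ clause. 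This completes the induction, and the lemma follows.
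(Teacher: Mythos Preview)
Your proposal is correct and follows essentially the same approach as the paper: a structural induction on $A$, with the neutral/intuitionistic cases handled by the fact that the $\modelse$ clauses coincide with the $\IK$ clauses and $\tradmk{\cdot}$ is homomorphic on those connectives, and the classical cases handled by unfolding the double-negation defining clause and applying the induction hypothesis to the genuine subformulas. If anything, your version is more careful than the paper's: you explicitly prove the pointwise equivalence $\m,w\modelse A$ iff $\m,w\models\tradmk{A}$ and then derive validity, whereas the paper's sketch phrases the inductive hypothesis at the validity level (``$\modelse B$ iff $\models\tradmk{B}$''), which strictly speaking is too weak to push through the $\neg$ and $\Box$ steps in the $\clozenge$ case.
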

\begin{proof} 
First of all, note that an Ecumenical modal Kripke model is totally determined by the valuation of the (intuitionistic) propositional variables. 
The proof follows then by easy structural induction on the formula $A$. 
For example, if $A=\clozenge B$ then 
$\modelse A$ iff $\modelse \neg\Box\neg B$. By inductive hypothesis, $\modelse B$ iff $\models \tradmk{B}$. Since $\neg$ and $\Box$ are neutral operators, then 
$\modelse\neg\Box\neg B$ and  iff $\models \neg\Box\neg \tradmk{B}$.
\end{proof}
Now, observe that every formula in $\IK$ is a formula in $\EK$, hence the following theorem holds.
\begin{theorem}\label{thm:scLEci}
$\EK$ is sound and complete w.r.t.~the Ecumenical modal Kripke semantics, that is, $\vdash_{\EK} A$ iff $\modelse A$.
\end{theorem}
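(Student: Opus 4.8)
The plan is to reduce the soundness and completeness of $\EK$ to the corresponding result for intuitionistic modal logic $\IK$ (equivalently, Simpson's system $\mathcal L_{\Box\Diamond}$), exploiting the fact that the whole Ecumenical apparatus has been engineered so that the classical connectives are mere double-negation shorthands for neutral/intuitionistic ones. The three preceding results do almost all the work: Lemma~\ref{lem:trad1} tells us that provability in $\labEK$ is preserved and reflected by the translation $\tradms{\cdot}$ into $\mathcal L_{\Box\Diamond}$, Lemma~\ref{lem:trad3} tells us that satisfaction $\modelse$ in an Ecumenical modal Kripke model agrees with satisfaction $\models$ in the underlying $\IK$-model under the formula translation $\tradmk{\cdot}$, and Theorem~\ref{thm:meta} links $\labEK$-provability to $\LEci$-derivability, which by definition is what $\vdash_{\EK}$ means.

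First I would make precise the bridge between the axiomatic/meta-logical notation $\vdash_{\EK}A$ and the labeled system: by the meta-logical characterization of Section~\ref{sec:modal} together with Theorem~\ref{thm:meta}, $\vdash_{\EK}A$ holds iff $\vdash_{\labEK} \seq x:A$ (for a fresh world $x$), so it suffices to prove that $\vdash_{\labEK}\seq x:A$ iff $\modelse A$. Next I would chase the following chain of equivalences. Starting from $\vdash_{\labEK}\seq x:A$, Lemma~\ref{lem:trad1} gives $\vdash_{\mathcal L_{\Box\Diamond}} \seq x:\tradmk{A}$. Since $\mathcal L_{\Box\Diamond}$ is sound and complete with respect to the birelational semantics of $\IK$ (Simpson~\cite{Sim94}), this is equivalent to $\models \tradmk{A}$, \ie~$\tradmk{A}$ being valid in every birelational model. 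Finally, Lemma~\ref{lem:trad3} converts this back to $\modelse A$. Reading the chain in both directions yields the two implications of the theorem simultaneously, so no separate soundness and completeness arguments are needed.

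The one subtlety I would be careful to articulate is exactly the remark already made in the text: an Ecumenical modal Kripke model is \emph{nothing else} than a standard birelational model for $\IK$, because $\modelse$ restricted to the neutral and intuitionistic connectives coincides with $\models$, and the classical clauses are forced by the double-negation definitions. This is what guarantees that quantifying over Ecumenical modal Kripke models in the statement $\modelse A$ and quantifying over $\IK$ birelational models in $\models\tradmk{A}$ range over the same class of structures, so that Lemma~\ref{lem:trad3}'s pointwise equivalence lifts to an equivalence of validity. I would state this alignment of model classes explicitly before invoking Simpson's completeness, since it is the hinge on which the whole reduction turns.

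The main obstacle is essentially bookkeeping rather than conceptual: one must ensure that the frame conditions F1 and F2 built into Definition~\ref{def:ekripke} are precisely those of Simpson's birelational frames, so that ``every Ecumenical modal Kripke model'' and ``every $\IK$ birelational model'' are literally the same quantification domain, and that the translations $\tradmk{\cdot}$ and $\tradms{\cdot}$ commute appropriately with satisfaction and provability on the nose. Granting the three cited lemmas and Simpson's adequacy theorem, the proof is then a short transitive composition of equivalences, and I would expect it to occupy only a few lines.
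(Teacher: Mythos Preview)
Your proposal is correct and follows exactly the route the paper takes: the theorem is obtained by chaining Lemma~\ref{lem:trad1}, Simpson's soundness and completeness for $\mathcal L_{\Box\Diamond}$, and Lemma~\ref{lem:trad3}, after observing that Ecumenical modal Kripke models coincide with the standard birelational models for $\IK$. The paper itself does not spell out the proof beyond setting up the two lemmas and remarking that the result then holds; your write-up is in fact a more explicit rendering of the same argument.
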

%
Moreover, $\EK$ is sound and complete w.r.t. the axioms $\ka-\ka_4$ presented in Section~\ref{sec:ax}.

\section{Extensions}\label{sec:extensions}
Depending on the application, several further modal logics can be defined as extensions of $\K$ by simply restricting the class of frames we consider. 
Many of the restrictions one can be interested in are definable as formulas of first-order logic, where the binary predicate $\rel(x,y)$ refers to the corresponding accessibility relation. 
Table \ref{tab:axioms-FOconditions} summarizes some of the most common logics, the corresponding frame property, together with the modal axiom capturing it~\cite{Sah75}. 

In the Ecumenical setting, we adopt the motto that ``relational predicates are second order citizens'' suggested in Section~\ref{sec:view} and interpret the implications in the axioms intuitionisticaly. We will refer to the ecumenical logic satisfying the axioms $F_1,\ldots, F_n$ as $\EK F_1\ldots F_n$.

We conjecture that the semantics of a given logic $\EK F_1 \ldots F_n$ can be inferred from the one for $\EK$ of Definition $\ref{def:ekripke}$: We just consider Ecumenical models whose accessibility relation satisfies the set of properties $\{F_1, \ldots, F_n\}$ in place of generic Ecumenical models. 

\begin{table}[t]
\begin{center}
  \begin{tabular}{|c|c|c|}
  \hline
    \textbf{Axiom} & \textbf{Condition} & \textbf{First-Order Formula} \\
  \hline
    $\T:\,\square A \rightarrow A \wedge A \rightarrow \lozenge A$ & Reflexivity & $\forall x. \rel(x,x)$ \\
\hline
    $\4:\,\square A \rightarrow \square \square A \wedge \lozenge\lozenge A \rightarrow \lozenge A$ & Transitivity & $\forall x,y,z. (\rel(x,y) \wedge \rel(y,z)) \rightarrow \rel(x,z)$ \\
\hline
    $\5:\,\square A \rightarrow \square \lozenge A \wedge \lozenge\square A \rightarrow \lozenge A$ & Euclideaness & $\forall x,y,z. (\rel(x,y) \wedge \rel(x,z)) \rightarrow \rel(y,z)$ \\
\hline
    $\B:\,A \rightarrow \square \lozenge A \wedge\lozenge \square A \rightarrow A$ & Symmetry & $\forall x,y. \rel(x,y) \rightarrow \rel(y,x)$ \\
\hline
  \end{tabular}
  \end{center}
  \caption{Axioms and corresponding first-order conditions on $\rel$.}
    \label{tab:axioms-FOconditions}  
\end{table}

Furthermore, following the approaches in~\cite{Sim94,Vig00,Neg05}, we can transform the axioms in Table~\ref{tab:axioms-FOconditions} into rules. 
%
In future work, we would like to investigate how these rules behave w.r.t~the ecumenical setting.

\begin{figure}[t]
\[
\infer[\T]{\Gamma\vdash w:C}{xRx,\Gamma\vdash w:C}
\quad
\infer[\4]{xRy,yRz,\Gamma\vdash w:C}{xRz,xRy,yRz,\Gamma\vdash w:C}
\]\[
\infer[\5]{xRy,xRz,\Gamma\vdash w:C}{yRz,xRy,xRz,\Gamma\vdash w:C}
\quad
\infer[\B]{xRy,\Gamma\vdash w:C}{yRx,xRy,\Gamma\vdash w:C}
\]
\caption{Labeled sequent rules corresponding to axioms in Table~\ref{tab:axioms-FOconditions}.  
}
\label{fig:extK}
\end{figure}

As a first step in this research direction, we finish this section with two very interesting observations about the case of axiom $\T$, illustrating the complexity of the interaction of modal axioms and ecumenical connective. 
First of all, recall~\cite{Sim94,DBLP:conf/fossacs/Strassburger13} that by itself, $\square A \iimp A$ does not enforce reflexivity of an intuitionistic model. In fact, in frames having the reflexivity property, both $\square A \iimp A$ and $A \iimp \ilozenge A$ are provable.
\[
\infer[\wedge R]{xRx \vdash x: (\square A\iimp A) \wedge (A\iimp \ilozenge A)}
{\infer[\iimp R]{xRx \vdash x:\square A\iimp A}{\infer[\square L]{xRx, x:\square A \vdash x: A}{\infer[\init]{xRx, x: A \vdash x: A}{}}}&\infer[\iimp R]{xRx \vdash x: A \iimp \ilozenge A}{\infer[\ilozenge R]{xRx, x: A \vdash x: \ilozenge A}{\infer[\init]{xRx, x: A \vdash x: A}{}}}}
\]
For the converse, since $\square$ and $\ilozenge$ are not inter-definable, we need to add $A \iimp \ilozenge A$ in order to still be complete w.r.t. reflexive models. 

Finally, it is well known that the set intuitionistic propositional operators is ``independent'', \ie, that each operator cannot be defined in terms of the others (Prawitz proposed a syntactical proof of this result in~\cite{prawitz1965}). It is also known that in the case of (many) constructive modal logics the modal operators $\Box$ and $\lozenge$ are independent of each other, as it is the case in $\EK$. But what would be the consequence of adding $\neg\ilozenge\neg A\iimp \Box A$ as an extra axiom to $\EK$? The following derivation shows that the addition of this new axiom has a disastrous propositional consequence.
$$\small
\infer[\cut]{\vdash x:(A\ivee\neg A)}
{\infer[eq]{\vdash x:\Box (A\ivee\neg A)}
{\infer[\neg R]{\vdash x:\neg\ilozenge\neg (A\ivee\neg A)}
{\infer[\ilozenge L]{x:\ilozenge\neg (A\ivee\neg A)\vdash x:\bot}
{\infer=[\neg L,\ivee R2,\neg R]{xRy,y:\neg (A\ivee\neg A)\vdash x:\bot}
{\infer[\neg L,\ivee R1]{xRy,y:A, y:\neg (A\ivee\neg A)\vdash y:\bot}
{\infer[\init]{xRy,y:A, y:\neg (A\ivee\neg A)\vdash y:A}{}}}}}}&
\infer[\T]{x:\Box (A\ivee\neg A)\vdash x:(A\ivee\neg A)}
{\infer[\Box L]{xRx,x:\Box (A\ivee\neg A)\vdash x:(A\ivee\neg A)}
{\infer[\init]{xRx,x: (A\ivee\neg A)\vdash x:(A\ivee\neg A)}{}}}}
$$
where $eq$ represents the proof steps of the substitution of a boxed formula for its 
diamond version.\footnote{We have presented a proof with $\cut$ for clarity, remember that $\labEK$ has the cut-elimination property (see Appendix~\ref{sec:cut}).} 

That is, if $\Box$ and $\lozenge_i$ are inter-definable, then $A\vee_i\neg A$ is a theorem and intuitionistic $\K\T$ collapses to a classical system!

\section{Discussion and conclusion}\label{sec:conc}
Some questions naturally arise with respect to Ecumenical systems: what (really) are Ecumenical systems? What are they good for? Why should anyone be interested in Ecumenical systems? What is the real motivation behind the definition and development of Ecumenical systems? Based on the specific case of the Ecumenical system that puts classical logic and intuitionist logic coexisting in peace in the same codification, we would like to propose three possible motivations for the definition, study and development of Ecumenical systems.
\paragraph{Philosophical motivation}
This was the motivation of Prawitz. Inferentialism, and in particular, logical inferentialism, is the semantical approach according to which the meaning of the logical constants can be specified by the rules that determine their correct use. According to Prawitz~\cite{DBLP:journals/Prawitz15},
\begin{quote}
{\em ``Gentzen's introduction rules, taken as meaning constitutive of the logical constants of the language of predicate logic, agree, as is well known, with how intuitionistic mathematicians use the constants. On the one hand, the elimination rules stated by Gentzen become all justified when the constants are so understood because of there being reductions, originally introduced in the process of normalizing natural deductions, which applied to proofs terminating with an application of elimination rules give canonical proofs of the conclusion in question. On the other hand, no canonical proof of an arbitrarily chosen instance of the law of the excluded middle is known, nor any reduction that applied to a proof terminating with an application of the classical form of reductio ad absurdum gives a canonical proof of the conclusion.''}
\end{quote}
But what about the use classical mathematicians make of the logical constants? Again, according to Prawitz,
\begin{quote}
{\em ``What is then to be said about the negative thesis that no coherent meaning can be attached on the classical use of the logical constants? Gentzen's introduction rules are of course accepted also in classical reasoning, but some of them cannot be seen as introduction rules, that is they cannot serve as explanations of meaning. The classical understanding of disjunction is not such that $A\vee B$ may be rightly asserted only if it is possible to prove either A or B, and hence Gentzen's introduction rule for disjunction does not determine the meaning of classical disjunction.''}
\end{quote}
As an alternative, in a recent paper~\cite{Murzi2018} Murzi presents a different approach to the extension of inferentialism to classical logic. There are some natural (proof-theoretical) inferentialist requirements on admissible logical rules, such as harmony and separability (although harmonic, Prawitz' rules for the classical operators do not satisfy separability). According to Murzi, our usual logical practice does not seem to allow for an inferentialist account of classical logic (unlike what happens with respect to intuitionistic logic). Murzi proposes a new set of rules for classical logical operators based on: absurdity as a punctuation mark, and Higher-level rules~\cite{DBLP:journals/sLogica/Schroeder-Heister14}. This allows for a ``pure'' logical system, where negation is not used in premises. 
\paragraph{Mathematical/computational motivation}
(This was actually the original motivation for proposing Ecumenical systems.) The first Ecumenical system (as far as we know) was defined by Krauss  in a technical report of the University of Kassel~\cite{Krauss} (the text was never published in a journal). The paper is divided in two parts: in the first part, Krauss' Ecumenical system is defined and some  properties proved. In the second part, some theorems of basic algebraic number theory are revised in the light of this (Ecumenical) system, where constructive proofs of some ``familiar classical proofs'' are given (like the proof of Dirichlet's Unit Theorem).
The same motivation can be found in the final passages of the paper~\cite{DBLP:journals/Dowek16a}, where Dowek examines what would happen in the case of axiomatizations of mathematics. Dowek gives a simple example from Set Theory, and ends the paper with this very interesting remark:
\begin{quote}
{\em ``Which mathematical results have a classical formulation that can be proved from the axioms of constructive set theory or constructive type theory and which require a classical formulation of these axioms and a classical notion of entailment remains to be investigated.''}
\end{quote}
\paragraph{Logical motivation}
In a certain sense, the logical motivation naturally combines certain aspects of the philosophical motivation with certain aspects of the mathematical motivation.
According to Prawitz, one can consider the so-called classical first order logic as {\em ``an attempted codification of a fragment of inferences occurring in [our] actual deductive practice''}. Given that there exist different and even divergent attempts to codify our (informal) deductive practice, it is more than natural to ask about what relations are entertained between these codifications. 
Ecumenical systems may help us to have a better understanding of the relation between classical logic and intuitionistic logic.
But one could say that, from a logical point of view, there's nothing new in the ecumenical proposal: Based on translations, the new classical operators could be easily introduced by ``explicit definitions''. Let us consider the following dialogue between a classical logician (CL) and an intuitionistic logician (IL), a dialogue that may arise as a consequence of the translations mentioned above:
\begin{itemize}
\item IL: if what you mean by $(A \vee B)$ is $\neg (\neg A \wedge \neg B)$, then I can accept the validity of $(A \vee \neg A)$!
\item CL: but I do not mean $\neg (\neg A \wedge \neg \neg A)$  by $(A \vee \neg A)$. One must distinguish the excluded-middle from the the principle of non-contradiction. When I say that Goldbach's conjecture is either true or false, I am not saying that it would be contradictory to assert that it is not true and that it is not the case that it is not true!
\item IL: but you must realize that, at the end of the day, you just have one logical operator, the Sheffer stroke (or the Quine's dagger).
\item CL: But this is not at all true! The fact that we can define one operator in terms of other operators does not imply that we don't have different operators! We do have 16 binary propositional operators (functions). It is also true that we can prove $\vdash (A \vee_{c} B) \leftrightarrow \neg(\neg A \wedge \neg B)$ in the ecumenical system, but this doest not mean that we don't have three different operators, $\neg $, $\vee_{c}$ and $\wedge $.
\end{itemize}
Maybe we can resume the logical motivation in the following (very simple) sentence:
\begin{quote}
Ecumenical systems constitute a new and promising instrument to study the nature of different (maybe divergent!) logics.
\end{quote}
Now, what can we say about {\em modal} Ecumenical systems?
Regarding the {\bf philosophical view}, in~\cite{Syn20} we have used invertibility results in order to obtain a sequent system for Prawitz' Ecumenical logic with a minimal occurrences of negations, moving then towards 
a ``purer'' Ecumenical system. Nevertheless,  negation still plays an important r\^{o}le on interpreting classical connectives.  This is transferred to our definition of Ecumenical modalities, where the classical possibility is interpreted using negation. We plan to investigate what would be the meaning of classical possibility without impure rules. For the {\bf mathematical view}, our use of intuitionistic/classical/neutral connectives allows for a more chirurgical detection of the parts of a mathematical proof that are intrinsically intuitionistic, classical or independent. We now bring this discussion to modalities. Finally, concerning the {\bf logical view}, it would be interesting to explore some relations between general results on translations and Ecumenical systems, expanding this discussion to modalities. 

To finish, we would like to say a word about our choices.  It seems to be a common view, in the proof theory community, that Simpson's view is the more reasonable approach for modalities and intuitionism.
From that, the choice of a labeled proof system for $\EK$ seems only natural. But labeled systems have a very unfortunate feature: it is really tricky
to define an interpretation of sequents into the logical language. This problem often disappears when moving to nested-like systems~\cite{Brunnler:2009kx,Poggiolesi:2009vn,DBLP:conf/fossacs/Strassburger13,Lellmann:2015lns},  since the nestings keep the tree-structure information, matching exactly the history of a backwards proof search in an ordinary sequent calculus. 
Also, having an Ecumenical nested system would most probably allow for a comparison, in one system, between the nested sequent for $\IK$~\cite{DBLP:conf/fossacs/Strassburger13} and for $\CK$~\cite{DBLP:journals/corr/ArisakaDS15,DBLP:journals/aml/KuznetsS19}. Hence this is a path worth pursuing, together with the comparison of $\labEK$ with other labelled sequent systems for intuitionistic modal logics, specially the recent ones proposed in~\cite{marin:hal-02390454} and~\cite{DBLP:journals/corr/abs-1901-09812}.

%
%
%
%

\newpage
\appendix
\section{Cut-elimination for $\labEK$}\label{sec:cut}
In face of Theorem~\ref{thm:meta} most of the proof theoretical properties of the system $\labEK$ can be inherited from $\LEci$. It is not different for the property of cut-elimination. Hence we will only illustrate the process here.

The extension of the Ecumenical weight for formulas presented~\cite{luiz17} to modalities is defined bellow.
\begin{definition}\label{def:ew}
The Ecumenical weight ($\ew$) of a formula in $\Lscr$ is recursively  defined as
\begin{itemize}
\item $\ew(P_i)=\ew(\bot)=0$; 
\item $\ew(A\star B)=\ew(A)+\ew(B)+1$ if $\star\in\{\wedge,\iimp,\ivee\}$;
\item $\ew(\heartsuit A)=\ew(A)+1$ if $\heartsuit\in\{\neg,\ilozenge,\Box\}$;
\item $\ew(A\circ B)=\ew(A)+\ew(B)+4$ if $\circ\in\{\cimp,\cvee\}$;
\item $\ew(P_c)=4$;
\item $\ew(\clozenge A)=\ew(A)+4$.
\end{itemize}
\end{definition}
Intuitively, the Ecumenical weight measures the amount of extra information needed (the negations added) in order to define the classical connectives from the intuitionistic and neutral ones.
\begin{theorem}\label{thm:cut}
The rule
\[
\infer[\cut]{\Gamma\vdash z:C}
{\Gamma\vdash x:A & x:A,\Gamma\vdash z:C}
\]
is admissible in $\labEK$.
\end{theorem}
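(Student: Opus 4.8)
The plan is to obtain admissibility of $\cut$ in $\labEK$ as a direct consequence of cut-elimination for $\LEci$, using the meta-logical correspondence of Theorem~\ref{thm:meta}. Given a $\cut$ on $x{:}A$ whose two premises $\Gamma\vdash x{:}A$ and $x{:}A,\Gamma\vdash z{:}C$ are provable in $\labEK$, I would apply direction $(1)\Rightarrow(2)$ of Theorem~\ref{thm:meta} to each, producing $\LEci$-proofs of $[\Gamma]\seq\tradme{A}{x}$ and of $\tradme{A}{x},[\Gamma]\seq\tradme{C}{z}$ (here $[x{:}A,\Gamma]=\{\tradme{A}{x}\}\cup[\Gamma]$). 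A single $\LEci$ cut on $\tradme{A}{x}$ yields $[\Gamma]\seq\tradme{C}{z}$, and since $\LEci$ admits cut-elimination~\cite{Syn20} this sequent is provable without cut. Direction $(2)\Rightarrow(1)$ then transports it back to a $\labEK$-proof of $\Gamma\vdash z{:}C$, which is exactly the admissibility statement. This is the precise sense in which the property is \emph{inherited} from $\LEci$.

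To keep the appendix self-contained and to \emph{illustrate the process}, I would additionally sketch the direct syntactic reduction. It runs by the standard double induction: a primary induction on the Ecumenical weight $\ew(A)$ of the cut formula (Definition~\ref{def:ew}) and a secondary induction on the sum of the heights of the two premise derivations. The commutative cases, where $A$ is not principal in the last rule of one premise, permute the cut upwards and are handled by the secondary induction; the cases in which a premise is an initial sequent or ends in $\W$ are immediate. The substance lies in the principal cases, where $A$ is principal on both sides and the cut is traded for cuts on the immediate components of $A$.

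The point at which Definition~\ref{def:ew} is indispensable --- and what I expect to be the main obstacle --- is the treatment of the classical connectives and the classical diamond. Their principal reductions do not cut on honest subformulas but on negated, $\bot$-laden variants (for instance the $\clozenge R/\clozenge L$ case is routed through a $\Box\neg$-formula before reducing to a cut on the diamond's argument at a fresh world). Because $\ew$ charges an extra $+4$ for each classical constructor, so that $\ew(\clozenge A)=\ew(A)+4$, $\ew(P_c)=4$, and similarly for $\cimp$ and $\cvee$, the newly created cut formulas, though syntactically larger, are strictly lighter in $\ew$, and the primary induction still decreases. The remaining delicate point is the eigenvariable and relational-atom bookkeeping in the modal principal cases: for $\Box$ (left premise ending in $\Box R$ with fresh $y$, right premise in $\Box L$ along $xRy$) and for $\ilozenge$ and $\clozenge$, the reduction must first instantiate the eigenvariable and carry the relevant relational atoms so that the descendant cut is well formed and strictly $\ew$-smaller. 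With these representative cases verified, the neutral and intuitionistic cases follow the pattern already established for $\LEci$ in~\cite{luiz17,Syn20}.
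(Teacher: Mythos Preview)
Your proposal is correct and matches the paper's approach closely. The paper's appendix opens by noting exactly your inheritance observation---that cut-elimination for $\labEK$ can be read off from $\LEci$ via Theorem~\ref{thm:meta}---and then, like you, only \emph{illustrates} the direct Gentzen argument, giving the $\clozenge R/\clozenge L$ principal reduction (the cut on $\clozenge A$ is traded for a single cut on $\Box\neg A$, whose weight $\ew(A)+2$ is strictly below $\ew(\clozenge A)=\ew(A)+4$) and remarking that non-principal cuts permute upward with smaller cut-height. Your write-up makes the transfer argument via Theorem~\ref{thm:meta} more explicit than the paper does, but the two are the same in substance.
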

\begin{proof}
The proof is by the usual Gentzen method. The principal cases either eliminate the top-most cut or substitute it for cuts over simpler ecumenical formulas, \eg\ 
\[
\vcenter{\infer[\cut]{\Gamma\vdash z:\bot}
{\infer[\clozenge R]{\Gamma\vdash x:\clozenge A}
{\deduce{x:\Box\neg A,\Gamma\vdash x:\bot}{\pi_1}}&
\infer[\clozenge L]{x:\clozenge A,\Gamma, \vdash z:\bot}
{\deduce{xRy, y: A,\Gamma\vdash x:\bot}{\pi_2}}}}
\leadsto
\]
\[
\infer[\cut]{\Gamma\vdash z:\bot}
{\infer[\Box R]{\Gamma\vdash x:\Box\neg A}
{\infer[\neg R]{xRy,\Gamma\vdash y:\neg A}
{\deduce{xRy,y:A,\Gamma\vdash y:\bot}{\pi_2}}}&
\deduce{ x:\Box\neg A,\Gamma, \vdash z:\bot}{\pi_1}}
\]
Observe that the label of bottom is irrelevant due to the weakening rule $\W$ (that we have suppressed). Hence the Ecumenical weight on the cut formula passes from $\ew( \clozenge x.A)=\ew(A)+4$ to $\ew(\Box\neg A)=\ew(A)+2$.

The non-principal cuts can be flipped up as usual, generating cuts with smaller cut-height.
\end{proof}

\section{Proofs of axioms  $\ka-\ka_4$ in $\labEK$}\label{app}

We show next that axioms $\ka-\ka_4$ are provable in $\labEK$.
\begin{itemize}
\item $\ka:\square(A\iimp B) \iimp (\square A \iimp \square B) $:
     \[
     \infer[\iimp R]{\vdash x:\square(A\iimp B) \iimp (\square A \iimp \square B) }
     {\infer[\iimp R]{x: \square(A\iimp B) \vdash x: \square A \iimp \square B}
     {\infer[\square R]{x: \square(A\iimp B), x: \square A \vdash x: \square B}
     {\infer[\square L]{xRy, x: \square(A\iimp B), x: \square A \vdash y: B}
     {\infer[\square L]{xRy, x: \square(A\iimp B), y: A ,x: \square A \vdash y: B}
     {\infer[\iimp L]{xRy, y: A\iimp B, x: \square(A\iimp B), y: A ,x: \square A \vdash y: B}
     {\infer[]{xRy, y: A\iimp B, x: \square(A\iimp B), y: A ,x: \square A \vdash y: A}{}& \infer[]{xRy, x: \square(A\iimp B), y: A ,x: \square A, y: B \vdash y: B}{}
     }}}}}}
     \]
\item $\ka_1:\square(A\iimp B) \iimp (\ilozenge A \iimp \ilozenge B)$
    \[
    \infer[\iimp R]{\vdash x: \square(A\iimp B) \iimp (\ilozenge A \iimp \ilozenge B)}
    {\infer[\iimp R]{x: \square(A\iimp B) \vdash x: \ilozenge A \iimp \ilozenge B}
    {\infer[\ilozenge L]{x: \square(A\iimp B), x: \ilozenge A \vdash x: \ilozenge B}
    {\infer[\ilozenge R]{xRy, x: \square(A\iimp B), y: A \vdash x: \ilozenge B}
    {\infer[\square L]{xRy, x: \square(A\iimp B), y: A \vdash y: B}
    {\infer[\iimp L]{xRy, y: A\iimp B, x: \square(A\iimp B), y: A \vdash y: B}
    {\infer[]{xRy, y: A\iimp B, x: \square(A\iimp B), y: A \vdash y: A}{}& \infer[]{xRy, x: \square(A\iimp B), y: A, y: B \vdash y: B}{}
    }}}}}}
    \]
\item $\ka_2:\ilozenge(A\vee_i B)\iimp (\ilozenge A \vee_i \ilozenge B)$
    \[
    \infer[\iimp R]{\vdash x: \ilozenge(A\vee_i B)\iimp (\ilozenge A \vee_i \ilozenge B)}
    {\infer[\ilozenge L]{x: \ilozenge(A\vee_i B) \vdash x: \ilozenge A \vee_i \ilozenge B}
    {\infer[\vee_i L]{xRy, y: A\vee_i B \vdash x: \ilozenge A \vee_i \ilozenge B}
    {\infer[\vee_i R_1]{xRy, y: A \vdash x: \ilozenge A \vee_i \ilozenge B}{\infer[\ilozenge R]{xRy, y: A \vdash x: \ilozenge A}{\infer[]{xRy, y: A \vdash y:  A}{}}}
    &\infer[\vee_i R_2]{xRy, y: B \vdash x: \ilozenge A \vee_i \ilozenge B}{\infer[\ilozenge R]{xRy, y: B \vdash x: \ilozenge B}{\infer[]{xRy, y: B \vdash y:  B}{}}}
    }}}
    \]
\item $\ka_3:(\ilozenge A \iimp \square B) \iimp \square(A \iimp B)$
    \[
    \infer[\iimp R]{\vdash x: (\ilozenge A \iimp \square B) \iimp \square(A \iimp B)}
    {\infer[\square R]{x: (\ilozenge A \iimp \square B) \vdash x: \square(A \iimp B)}
    {\infer[\iimp R]{xRy, x: (\ilozenge A \iimp \square B) \vdash y: A \iimp B}
    {\infer[\iimp L]{xRy, x: (\ilozenge A \iimp \square B), y: A \vdash y: B}
    {\infer[\ilozenge R]{xRy, x: (\ilozenge A \iimp \square B), y: A \vdash x:\ilozenge A}{\infer[]{xRy, x: (\ilozenge A \iimp \square B), y: A \vdash y: A}{}}
    &\infer[\square L]{xRy, y: A, x:\square B \vdash y: B}{\infer[]{xRy, y: A, y: B \vdash y: B}{}}}}}}
    \]
\item $\ka_4:\ilozenge\bot \iimp \bot$
    \[
    \infer[\iimp R]{\vdash x: \ilozenge\bot \iimp \bot}
    {\infer[\ilozenge L]{x: \ilozenge\bot \vdash x: \bot}
    {\infer[\bot]{xRy, y: \bot \vdash x: \bot}{}}}
    \]
\end{itemize}

\end{document}